\begin{document}
\title{Four short stories on surprising algorithmic uses of treewidth\thanks{This research is a part of a project that has received funding from the European Research Council (ERC) under the European Union's Horizon 2020 research and innovation programme under grant agreement SYSTEMATICGRAPH (No.~725978).}}

\author{D\'aniel Marx}
\authorrunning{D.~Marx}
\institute{Max Planck Institute for Informatics\\ Saarland Informatics Campus, Germany\\
  \email{dmarx@mpi-inf.mpg.de} \\ \ \\
  {\em Dedicated to Hans L.~Bodlaender on the occasion of his 60th birthday.}
}

\maketitle

\global\long\def\Aut#1{\mathsf{Aut}(#1)}
\global\long\def\Emb#1#2{\mathsf{Emb}(#1\to#2)}
\global\long\def\StrEmb#1#2{\mathsf{StrEmb}(#1\to#2)}
\global\long\def\Hom#1#2{\mathsf{Hom}(#1\to#2)}
\global\long\def\Sub#1#2{\mathsf{Sub}(#1\to#2)}

  \newcommand{\rhofine}{\rho_0}
 \global\long\def\StrEmbProb#1{\mathsf{\#StrEmb}(#1)}
\global\long\def\EmbProb#1{\mathsf{\#Emb}(#1)}
\global\long\def\EmbNum{\mathsf{\#Emb}}
\global\long\def\IndProb#1{\mathsf{\#Ind}(#1)}
\global\long\def\IndPropProb#1{\mathsf{\#IndProp}(#1)}
\global\long\def\HomProb#1{\mathsf{\#Hom}(#1)}
\global\long\def\HomNum{\mathsf{\#Hom}}
\global\long\def\contract#1#2{#1 / #2}

\begin{abstract}
This article briefly describes four algorithmic problems where the notion of treewidth is very useful. Even though the problems themselves have nothing to do with treewidth, it turns out that combining known results on treewidth allows us to easily describe very clean and high-level algorithms.
\end{abstract}

\keywords{treewidth \and parameterized complexity \and fixed-parameter tractability \and bidimensionality}

\section{Introduction}

While the definition of treewidth may seem very technical at first sight, the naturality of treewidth is witnessed by the fact that it was introduced independently at least three times with equivalent definitions by different authors \cite{halin72,DBLP:journals/jct/BerteleB73,DBLP:journals/jct/RobertsonS84}. One may arrive to the study of treewidth from various directions and justify its importance with different arguments. One can, for example, argue that graphs of low treewidth (or some generalization of it) appear naturally in certain applications \cite{DBLP:conf/amw/FischlGLP19,DBLP:journals/pvldb/BonifatiMT17,DBLP:conf/icdt/ManiuSJ19,DBLP:journals/iandc/Thorup98}, hence algorithms for such graphs could be of practical interest. Or one could say that algorithms on bounded-treewidth graphs are based on the fundamental idea of recursively splitting the problem along small separators, and the study of treewidth is a good formalization of the study of this basic principle. But perhaps the nicest and most surprising reason for arriving at this notion is when the original goal has nothing to do with treewidth, but suddenly treewidth appears as the right theoretical tool for handling the problem. This article contains four such ``war stories,'' where the notion of treewidth and algorithms for bounded-treewidth graphs give very elegant solutions, which are sometimes in fact more efficient than those that were obtained earlier by involved and problem-specific techniques.

The four stories below are intentionally kept very brief in order to highlight the conceptual simplicity of the arguments. The aim is to show how certain high-level results can be combined in a clean way to achieve our goals. The detailed discussions or proofs of the results we are building on are beyond the scope of this article.
Later in this volume, the article of Marcin Pilipczuk contains more advanced examples of algorithmic use of treewidth bounds \cite{marcin-survey}.

\section{Bidimensionality}

Restricting an algorithmic problem to a certain family of graphs can make it easier than trying to solve it in general on every possible graph. A large part of the literature on algorithmic graph theory concerns algorithms for restricted classes of graphs that are of practical or theoretical significance. Restriction to planar graphs are studied both because of their interesting mathematical properties and as a starting point for modelling, e.g., road networks or 2D geometric problems.

From the viewpoint of polynomial-time solvability vs.~NP-hardness, the restriction to planarity does not seem to make the problem significantly easier. Most of the classic NP-hard problems (e.g., \textsc{3-Coloring}, \textsc{Maximum Indepenent Set}, \textsc{Hamiltonian Cycle}, etc.) remain NP-hard on planar graphs. The situation is very different from the viewpoint of parameterized complexity. Many of the basic problems that are W[1]-hard on general graphs turn out to be FPT on planar graphs. In fact, it took some time to arrive to the first relatively simple and natural problems that are W[1]-hard on planar graphs \cite{DBLP:journals/mst/CaiFJR07,DBLP:conf/iwpec/BodlaenderLP09}.

The restriction to planarity can help even for problems that are already FPT for general graphs. One of the main goals of the area of parameterized algorithms is to design algorithms with running time $f(k)n^{O(1)}$ such that the dependence $f(k)$ on the parameter is a function that grows as slowly as possible. For many of the fundamental problems studied in parameterized algorithms (e.g., \textsc{Vertex Cover}, \textsc{Feedback Vertex Set}, \textsc{$k$-Path}, \textsc{Odd Cycle Transversal}), algorithms with running time $2^{O(k)}n^{O(1)}$ are known. Furthermore, it is very likely that this form of running time is optimal: it is known that, under the Exponential Time Hypothesis (ETH) \cite{MR1894519,ImpagliazzoP01}, no algorithm with running time $2^{o(k)}n^{O(1)}$ exists for these problems. When restricted to planar graphs, significantly better algorithms are known for many of these problems, typically with running times of the form $2^{O(\sqrt{k})}n^{O(1)}$ or $2^{O(\sqrt{k}\log k)}n^{O(1)}$. Below we show how a very clean argument based on treewidth delivers such agorithms for certain basic problems; for others, more involved problem-specific ideas are needed \cite{KleinM14,pst-kernel,PilipczukPSL13,FominLMPPS16,DBLP:conf/fsttcs/LokshtanovSW12,DBLP:journals/corr/AboulkerBHMT15,DBLP:journals/iandc/DornFLRS13}. The main argument we present here was described first by Fomin and Thilikos \cite{DBLP:journals/siamcomp/FominT06} (for the \textsc{Dominating Set} problem) and was further developed under the name ``bidimensionality'' (see, e.g., \cite{DBLP:journals/talg/DemaineFHT05,DBLP:journals/jacm/DemaineFHT05,DBLP:journals/algorithmica/DemaineHT05,DBLP:journals/siamdm/DemaineFHT04}).

Let us consider the \textsc{$k$-Path} problem as our running example: given a (planar) graph $G$ and an integer $k$, we have to decide if $G$ contains a simple path on $k$ vertices. Let us first note that \textsc{$k$-Path} is FPT parameterized by the treewidth $w$ of the input graph $G$. More precisely, standard dynamic programming techniques give $2^{O(w\log w)}n^{O(1)}$ running time, while more sophisticated arguments are needed to obtain $2^{O(w)}n^{O(1)}$ time \cite{DBLP:journals/jcss/DornFT12,DBLP:journals/algorithmica/DornPBF10,DBLP:conf/stacs/Dorn10,DBLP:journals/dam/Dorn10,DBLP:journals/iandc/BodlaenderCKN15,DBLP:conf/focs/CyganNPPRW11,DBLP:journals/jacm/FominLPS16} (note that some of these algorithms are randomized  and some of these algorithms work only on planar graphs).
\begin{theorem}\label{th:kpathtw}
\textsc{$k$-Path} can be solved in time $2^{O(w)}n^{O(1)}$ if a tree decomposition of width $w$ is given in the input.
\end{theorem}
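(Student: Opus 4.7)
The plan is to run a dynamic programming along the provided tree decomposition. For each bag $B$ of size at most $w+1$, the natural DP table is indexed by ``partial-path signatures'': a subset $S\subseteq B$ of vertices touched by the partial solution, plus, for the endpoints of the partial path fragments inside $S$, a matching that records which endpoints are joined through the already-processed part of the graph. Standard introduce/forget/join transitions on a nice tree decomposition update these tables correctly. The snag is the size of the table: the number of matchings on up to $w+1$ endpoints is $(w+1)^{O(w)}$, so this direct implementation only yields a $2^{O(w\log w)}n^{O(1)}$ bound.

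To reach $2^{O(w)}n^{O(1)}$ I would plug in one of the two modern ``connectivity DP'' techniques. The most concise is Cut \& Count (Cygan, Nederlof, Pilipczuk, Pilipczuk, van Rooij, Wojtaszczyk): instead of deciding the existence of a $k$-path, count, modulo~$2$ and weighted by random edge weights drawn for the Isolation Lemma, the pairs consisting of a candidate edge set of a length-$k$ walk together with a ``consistent cut'' of its vertex set. Disconnected candidates contribute an even number of consistent cuts and cancel, while a unique isolated $k$-path survives with constant probability. The auxiliary DP only has to remember, per bag vertex, its side of the cut and the parity of its degree so far, which is $O(1)$ bits per vertex; combined with subset-convolution at join nodes this runs in $2^{O(w)}n^{O(1)}$.

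For a deterministic bound the plan would be to use the representative-sets / rank-based approach of Bodlaender, Cygan, Kratsch, and Nederlof: at every node, the set of matchings appearing in the naive table can be replaced by an equivalent subfamily of size $2^{O(w)}$, exploiting the low $\mathrm{GF}(2)$-rank of the ``match-compatibility'' matrix on matchings of $w+1$ endpoints. Carrying a reduced representative through every introduce/forget/join keeps the table of size $2^{O(w)}$ throughout, and a final polynomial-time reduction step after each transition preserves correctness.

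The main obstacle, in either route, is exactly this step of avoiding explicit enumeration of matchings on bag endpoints while still correctly recognising when two partial fragments glue into a single path. Once one black-boxes Cut \& Count or the rank-based reduction, the remainder is routine treewidth DP (introduce, forget, join, plus a bookkeeping of the length $k$), and the theorem follows.
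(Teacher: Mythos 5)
Your proposal is correct and matches the paper's treatment: the paper does not prove this theorem but simply cites the known single-exponential connectivity-DP results, explicitly listing the Cut\,\&\,Count paper of Cygan et al.\ and the rank-based/representative-sets paper of Bodlaender, Cygan, Kratsch, and Nederlof among its references, which are exactly the two techniques you sketch. You also correctly identify the bottleneck (the $w^{O(w)}$ matchings in the naive table) that these techniques are designed to circumvent, so your account is a faithful expansion of what the paper black-boxes.
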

The second ingredient that we need is the Planar Excluded Grid Theorem \cite{MR1305057,DBLP:journals/algorithmica/GuT12}. A {\em minor} of a graph $G$ is a graph $H$ that is obtained by a sequence of vertex deletions, edge deletions, and edge contractions. A {\em $k\times k$ grid} is a graph with vertex set $[k]\times [k]$, where vertices $(x,y)$ and $(x',y')$ are adjacent if and only if $|x-x'|+|y-y'|=1$. The following theorem states that, in a very tight sense, the existence of a grid minor is the canonical reason why a planar graph has large treewidth:
\begin{theorem}[Planar Excluded Grid Theorem]\label{th:gridminor}
Every planar graph with treewidth at least $4.5k$ has a $k\times k$ grid minor.
\end{theorem}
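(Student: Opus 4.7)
The plan is to proceed in two stages: first convert the treewidth bound into a branchwidth bound, and then use the planar structure to extract a grid minor from high branchwidth. The classical Robertson--Seymour inequality says that for any graph $G$ with at least three edges, $\mathrm{bw}(G) \le \mathrm{tw}(G)+1 \le \lfloor 3\,\mathrm{bw}(G)/2\rfloor$. Hence treewidth at least $4.5k$ forces branchwidth at least $3k$. Branchwidth is also convenient in the planar setting because of the Seymour--Thomas duality $\mathrm{bw}(G) = \mathrm{bw}(G^*)$ for the planar dual $G^*$, which lets us freely pass between a graph and its dual when routing paths.

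The heart of the argument is to show that a planar graph of branchwidth at least $3k$ contains a $k\times k$ grid minor. I would approach this via the tangle framework: a graph has branchwidth at least $\ell$ if and only if it admits a tangle of order $\ell$. In the planar case such a tangle has a natural geometric content --- it concentrates around some location in the embedding, in the sense that the ``small side'' of every separation of low order consistently lies away from a common face or vertex. From this concentration one extracts a nested family of closed walks (a ``wall'') enclosing the location, and after contracting the wall one reads off the grid minor.

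The main obstacle is obtaining the tight constant. A loose pigeonhole argument, cutting the plane into concentric annular separators of order below the tangle threshold, readily produces a grid whose side is proportional to $\ell$ but with a worse constant. Matching the sharp bound $k \ge \lceil \ell/3\rceil$ is precisely the Gu--Tamaki refinement over Robertson--Seymour--Thomas, and requires a careful treatment of partial tangles together with an optimal wall-extraction procedure. An alternative route would bypass tangles and argue directly via a planar separator theorem combined with an iterated nested-cycle construction, but bringing the total constant down to $4.5$ through such elementary means would require similarly delicate bookkeeping. In either approach, planarity is used essentially: on general graphs, treewidth linear in $k$ is far from sufficient to force a $k\times k$ grid minor, so the planar embedding, either via duality or via the geometric content of the tangle, has to be invoked at the crucial step.
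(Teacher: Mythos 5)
The paper does not prove the Planar Excluded Grid Theorem; it cites it as a known result of Robertson, Seymour and Thomas, with the constant $4.5$ coming from the Gu--Tamaki improvement. Your sketch is a faithful high-level account of how those proofs go: the treewidth-to-branchwidth conversion via $\mathrm{tw}(G)+1 \le \lfloor 3\,\mathrm{bw}(G)/2\rfloor$ and the resulting arithmetic (treewidth at least $4.5k$ forces branchwidth at least $3k$) are correct, and the tangle/wall framework together with the Seymour--Thomas planar branchwidth duality is indeed the right machinery. But what you have written is an outline, not a proof: the step that carries essentially all of the mathematical weight --- extracting a $k\times k$ grid minor from a planar tangle of order $3k$, which is precisely the Gu--Tamaki refinement over Robertson--Seymour--Thomas --- is named and explicitly deferred to the literature rather than carried out. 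Since the paper also simply cites the theorem without proof, there is no discrepancy between your treatment and the paper's; just be aware that, viewed as a self-contained proof, the proposal leaves the genuinely difficult step entirely untouched.
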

In particular, Theorem~\ref{th:gridminor} implies that an $n$-vertex planar graph has treewidth $O(\sqrt{n})$: it certainly cannot contain a grid minor larger than $\sqrt{n}\times \sqrt{n}$.

Finally, we have to make two simple observations about the \textsc{$k$-Path} problem:
\begin{enumerate}
\item[(1)] The $k\times k$ grid contains a path on $k^2$ vertices: imagine a ``snake'' that visits the rows one after the other.
\item[(2)] If $H$ is a minor of $G$, then the length of the longest path in $H$ is not larger than in $G$. This can be proved by verifying that none of vertex deletion, edge deletion, or edge contraction can increase the length of the longest path.
\end{enumerate}
Now the claimed algorithm can be obtained by putting together these ingredients using a win/win approach. For simplicity, we describe an algorithm for the decision version of the problem where only a YES/NO answer has to be returned.
\begin{theorem}
\textsc{$k$-Path} on planar graphs can be solved in time $2^{O(\sqrt{k})}n^{O(1)}$.
\end{theorem}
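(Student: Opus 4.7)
The plan is a standard win/win that combines the three ingredients listed just before the theorem. The key observation is that the threshold $w = 4.5\lceil\sqrt{k}\rceil$ is exactly the value at which Theorem~\ref{th:kpathtw} already runs in $2^{O(\sqrt{k})}n^{O(1)}$ time, and at the same time it is the exact value above which Theorem~\ref{th:gridminor} forces a $\lceil\sqrt{k}\rceil\times\lceil\sqrt{k}\rceil$ grid minor to appear.

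The algorithm has two phases. First, I would try to produce a tree decomposition of $G$ of width at most $w := 4.5\lceil\sqrt{k}\rceil$. On planar graphs a constant-factor polynomial-time approximation for treewidth is available, and a single-exponential FPT algorithm for treewidth also works within our budget, so in time $2^{O(\sqrt{k})}n^{O(1)}$ we either obtain such a decomposition or conclude that the treewidth of $G$ exceeds $w$.

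In the first case, we apply Theorem~\ref{th:kpathtw} to the decomposition just produced, giving running time $2^{O(w)}n^{O(1)} = 2^{O(\sqrt{k})}n^{O(1)}$, and we return its answer. In the second case, Theorem~\ref{th:gridminor} applied with parameter $\lceil\sqrt{k}\rceil$ guarantees that $G$ contains the $\lceil\sqrt{k}\rceil\times\lceil\sqrt{k}\rceil$ grid as a minor. By observation~(1), this grid contains a simple path on $\lceil\sqrt{k}\rceil^{2}\geq k$ vertices, and by observation~(2), the longest path of $G$ is at least as long as the longest path of any minor of $G$. Hence $G$ itself contains a path on $k$ vertices, and we may safely return YES without any further computation on the large-treewidth side.

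The only genuine obstacle is the first phase: we need either a tree decomposition of width $O(\sqrt{k})$ or a reliable verdict that the treewidth is larger, in time $2^{O(\sqrt{k})}n^{O(1)}$. This is precisely what a constant-factor treewidth approximation (or an FPT exact algorithm invoked with parameter $w$) provides; the rest of the argument is bookkeeping around the threshold $4.5\lceil\sqrt{k}\rceil$ and the trivial arithmetic $\lceil\sqrt{k}\rceil^{2}\geq k$.
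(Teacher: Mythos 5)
Your proof is correct and follows the same win/win strategy as the paper: compare treewidth to the threshold $4.5\lceil\sqrt{k}\rceil$, answer YES via the grid-minor argument when treewidth is large, and otherwise run the $2^{O(w)}n^{O(1)}$ dynamic program on an (approximate) tree decomposition. The only cosmetic difference is the choice of approximation routine for treewidth --- the paper uses the $2^{O(w)}\cdot n$ 5-approximation of Bodlaender et al., whereas you leave the choice open --- but this does not change the argument.
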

\begin{proof}
  Let $w:=4.5\lceil\sqrt k \rceil$. If $G$ is a graph with treewidth at least $w$, then Theorem~\ref{th:gridminor} implies that $G$ contains a $\lceil \sqrt k \rceil \times \lceil \sqrt k \rceil $ grid minor $H$. Then the first observation above shows that $H$ contains a path on $k$ vertices and the second observation shows that $G$ also contains a path on $k$ vertices. Therefore, we can conclude that if the input graph $G$ has treewidth at least $w$, then it is a YES-instance: it surely contains a path on $k$ vertices.

  The algorithm proceeds as follows. First, we compute an (approximate) tree decomposition of $G$. For this purpose, it is convenient to use the algorithm  of Bodlaender et al.~\cite{DBLP:journals/siamcomp/BodlaenderDDFLP16}, which, given an integer $w$ and a graph $G$, in time $2^{O(w)}\cdot n=2^{O(\sqrt{k})}\cdot n$ either correctly states that treewidth of $G$ is larger than $w$, or gives a tree decomposition of width at most $5w+4$. We can complete the computation in both cases:
  \begin{itemize}
  \item If the algorithm states that $G$ has treewidth larger than $w$, then, as we have seen above, the answer is YES.
    \item If the algorithm returns a tree decomposition of width at most $5w+4=O(\sqrt{k})$, then we can invoke Theorem~\ref{th:kpathtw} to decide the existence of a path on $k$ vertices and return YES or NO accordingly. The running time is $2^{O(w)}n^{O(1)}=2^{O(\sqrt{k})}n^{O(1)}$, as required.
    \end{itemize}
Thus we have an algorithm that returns a correct YES/NO-answer in time $2^{O(\sqrt{k})}\cdot n^{O(1)}$.
\end{proof}
The same argument works for \textsc{Feedback Vertex Set} and \textsc{Vertex Cover}. Only the analogs of the two observations (1) and (2) need to be verified: the optimum value is $\Omega(k^2)$ on the $k\times k$ grid and that the minor operation cannot increase the optimum value.  A variant of the argument, based on contractions instead of minors, can give algorithms for \textsc{Independent Set} and \textsc{Dominating Set}. There are also less straighforward uses of Theorem~\ref{th:gridminor}, where it is invoked not on the input graph itself, but on some auxilliary graph defined in a nonobvious way; see the article of Marcin Pilipczuk later in this volume for some examples \cite{marcin-survey}.

\section{Exponential-time algorithms for graphs of maximum degree 3}

If the task is to find a subset of vertices satisfying certain properties, then  we can typically solve the problem in time $2^n\cdot n^{O(1)}$ on graphs with $n$ vertices by enumerating every subset. For many problems, it is easy to improve on this brute force algorithm. For example, in the case of the \textsc{Maximum Independent Set} problem (for graphs with arbitrarily large degree), there is a simple textbook example of an improved branching algorithm that beats the $2^n\cdot n^{O(1)}$ running time. As long as there is a vertex $v$ of degree at least 3, branch into two directions: either the solution avoids $v$ (in which case we can remove $v$, decreasing the size of the graph by 1) or it contains $v$ (in which case we can remove $v$ and its neighbors from the problem, decreasing the size of the graph by at least 4 vertices). The problem can be solved in polynomial time if every vertex has degree at most $2$. Analyzing the algorithm shows that its running time is $1.3803^n\cdot n^{O(1)}$. Further improvements are possible with more and more involved techniques \cite{DBLP:journals/siamcomp/TarjanT77,DBLP:journals/algorithmica/BourgeoisEPR12,DBLP:conf/fsttcs/KneisLR09,Jian1986847,DBLP:journals/jal/Robson86,DBLP:journals/jacm/FominGK09} with the current best algorithm having running time $1.1996^n\cdot n^{O(1)}$ \cite{DBLP:journals/iandc/XiaoN17}. Similar ``races'' for the best exponential-time algorithm are known for many other problems \cite{DBLP:series/txtcs/FominK10}. Let us remark that for some problems just beating the trivial $2^n\cdot n^{O(1)}$ running time is already highly nontrivial \cite{DBLP:journals/algorithmica/BliznetsFPV16,DBLP:conf/ictcs/Razgon07,DBLP:journals/algorithmica/CyganPPW14a}.

For the \textsc{Maximum Independent Set} problem on graphs of maximum degree 3, the current best algorithm has running time $1.0836^n\cdot n^{O(1)}$ \cite{DBLP:journals/tcs/XiaoN13}. Here we would like to highlight an earlier, less efficient algorithm that can be explained using the notion of treewidth very easily. Fomin and H\o ie \cite{DBLP:journals/ipl/FominH06} proved, using an earlier result of Monien and Preis \cite{DBLP:conf/mfcs/MonienP01}, that the pathwidth (and hence the treewidth) of an $n$-vertex graph with maximum degree 3 is essentially at most $n/6$. More precisely:

\begin{theorem}[Fomin and H\o ie~\cite{DBLP:journals/ipl/FominH06}]\label{th:cubicpathwidth}
For any $\epsilon>0$, there is an integer $n_\epsilon$ such that the pathwidth of any graph on $n>n_\epsilon$ vertices and maximum degree at most 3 is at most $(1/6+\epsilon)n$.
\end{theorem}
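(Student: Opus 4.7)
The plan is to combine the Monien--Preis edge bisection bound with a recursive construction of a linear vertex ordering whose vertex separation number (equivalently, pathwidth) is at most $(1/6+\epsilon)n$. The Monien--Preis theorem states that for every $\delta>0$, every cubic graph on $m$ vertices (for $m$ sufficiently large) admits an edge bisection $(A,B)$ with $||A|-|B||\le 1$ and at most $(1/6+\delta)m$ crossing edges. Converting this into a vertex separator, the set $S$ of $A$-side endpoints of crossing edges satisfies $|S|\le(1/6+\delta)m$ and disconnects $A\setminus S$ from $B$ in $G-S$.

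I would then build the ordering by strong induction on $n$. Apply Monien--Preis to $G$, recursively produce orderings $\sigma_A$ and $\sigma_B$ for the cubic graphs $G[A]$ and $G[B]$ (each on at most $n/2$ vertices), and concatenate $\sigma=\sigma_A\,\sigma_B$. A routine case analysis shows that, at a position $i$ inside the $\sigma_A$-block, the set of vertices already placed that still have a later unseen neighbor is the union of (i) the vertices active in the recursive ordering of $G[A]$ (future neighbors inside $A$) and (ii) the vertices of $S$ that have already appeared (future neighbors in $B$). A symmetric description applies inside the $\sigma_B$-block.

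The main obstacle is the width estimate. The naive recurrence $p(n)\le p(n/2)+(1/6+\delta)n$ unfolds geometrically to roughly $(1/3+\epsilon)n$, a factor of two larger than the claimed bound. To close this gap I would argue that contributions (i) and (ii) above do not simply add: a typical $S$-vertex already participates in the recursive frontier through its internal edges in $A$, so the two sets have substantial overlap. The refined bound proceeds by arranging the recursive ordering so that $S$-vertices appear at positions where the internal frontier is small, choosing $\delta$ sufficiently small relative to $\epsilon$, and terminating the recursion once the subgraph size drops below a threshold $n_\epsilon$ (below which the trivial pathwidth bound $p(m)\le m$ is absorbed into the $\epsilon$-slack). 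Balancing these three choices carefully yields the target pathwidth $(1/6+\epsilon)n$ for all $n>n_\epsilon$.
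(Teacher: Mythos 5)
Your starting point is the right one — the Fomin--H\o ie proof does indeed rest on the Monien--Preis bisection bound — and you have correctly identified the central difficulty: a straightforward divide-and-conquer recursion of the form $p(n)\le p(n/2)+(1/6+\delta)n$ telescopes to $(1/3+O(\delta))n$, a factor of two worse than the claimed bound. Recognizing this gap is the most important part of understanding why the theorem is not a one-liner.

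However, the proposed fix does not close the gap, and as written the proposal proves at best $(1/3+\epsilon)n$. The claimed ``substantial overlap'' between the internal frontier of $\sigma_A$ and the set of already-placed $S$-vertices is not justified and is false in general: an $S$-vertex can be placed early in $\sigma_A$, have all of its $A$-side neighbors placed soon afterward (so it leaves the internal frontier of $G[A]$), yet remain active for the rest of the $A$-block and the entire $B$-block because of its edge into $B$. Similarly, the suggestion to ``arrange the recursive ordering so that $S$-vertices appear at positions where the internal frontier is small'' is exactly the step that needs a proof and does not obviously exist: once $S$-vertices are placed they accumulate (they stay active until $B$ is processed), so by the end of the $A$-block essentially all of $S$ is active simultaneously, while there is no guarantee that the internal frontier of $G[A]$ has dropped near zero before then. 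The induction hypothesis you are recursing on gives only a bound on $\mathrm{pw}(G[A])$; it says nothing about \emph{which} positions of an optimal ordering have small frontier or about placing a prescribed vertex subset late. Strengthening the induction to carry such positional information is nontrivial and is not addressed. Finally, picking the threshold $n_\epsilon$ and $\delta$ small does not help: the loss is a multiplicative constant from the geometric sum, not an additive term that can be absorbed into $\epsilon$-slack. The actual Fomin--H\o ie argument derives the $(1/6+\epsilon)n$ pathwidth bound from Monien--Preis by a more careful construction than repeated balanced bisection; the factor-two wall you ran into is precisely what their argument is designed to avoid, and your sketch does not yet contain that idea.
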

Together with the fact that a \textsc{Maximum Independent Set} on an $n$-vertex  graph can be solved in time $2^w\cdot n^{O(1)}$ if a tree decomposition of width $w$ is given, it follows that the problem can be solved in time $2^{n/6}\cdot n^{O(1)}=1.1225^n\cdot n^{O(1)}$. The running time obtained as a simple consequence of this pathwidth bound was better than some earlier work at that time \cite{DBLP:conf/soda/Beigel99,DBLP:journals/algorithmica/ChenKX05}, but since then improved algorithms with more complicated and problem-specific arguments were found for this problem \cite{DBLP:journals/tcs/XiaoN13,DBLP:conf/iwpec/BourgeoisEP08,DBLP:journals/algorithmica/BourgeoisEPR12,DBLP:journals/jda/Razgon09}. In a similar way, algorithms for \textsc{Minimum Dominating Set} and \textsc{Max Cut} follow immediately from Theorem~\ref{th:cubicpathwidth}, which were better than some of the algorithms found by earlier problem specific techniques \cite{DBLP:journals/ipl/FominH06}.

\section{Finding and counting permutation patterns}

Interesting combinatorial and algorithmic problems can be defined on permutations and on the patterns they contain or avoid. A {\em permutation} of length $n$ is a bijection $\pi:[n]\to [n]$; typically we describe permutations by the sequence $(\pi(1), \pi(2), \ldots, \pi(n))$. We say that a permutation $\sigma$ of length $n$ {\em contains} a permutation $\pi$ of length $k$ if there is a mapping $f:[k]\to [n]$ such that $f(1)<f(2)<\dots< f(k)$ and $\pi(i)<\pi(j)$ if and only if $\sigma(f(i))< \sigma(f(j))$. That is, $\sigma$ contains $\pi$ if the sequence $(\pi(1), \ldots, \pi(k))$ can be mapped to a subsequence of $(\sigma(1),\dots, \sigma(n))$ in a way that preserves the relative order of the values. As an example, the permutation  $( 3, 4, 5, 2, 1, 7, 8, 6)$ contains the permutation $(2, 1, 3, 4)$ (e.g., by the mapping $(f(1),f(2),f(3),f(4))=(1,4,6,7)$), but it does not contain the permutation $(4, 3, 2, 1)$. Observe that the permutations {\em not} containing $(1, 2)$ are exactly the decreasing sequences, while the permutations {\em not} containing $(2, 1)$ are exactly the increasing sequences. As shown by Knuth~\cite[§\,2.2.1]{knuth68}, the permutations avoiding $(2, 3, 1)$ are exactly the permutations sortable by a single stack. From the extremal combinatorics point of view, a very natural question is to bound the number of permutations of length $n$ avoiding a fixed permutation $\pi$. Marcus and Tardos \cite{MT} proved a long-standing conjecture of Stanley and Wilf\footnote{Marcus and Tardos \cite{MT} mentions that the conjecture was formulated around 1992 (but it is hard to find a citable source) and the PhD thesis of Julian West is an even earlier source \cite{west-phd}.} by showing that for every fixed permutation $\pi$, there is a constant $c(\pi)$ such that the number of permutations of length $n$ avoiding $\pi$ is at most $2^{c(\pi)\cdot n}$. This has to be contrasted with the fact that the total number of permutations of length $n$ is $n!=2^{O(n\log n)}$.

From the algorithmic point of view, perhaps the most fundamental question is testing for containment: given a permutation $\sigma$ of length $n$ and a permutation $\pi$ of length $k$, does $\sigma$ contain $\pi$? The problem is often called \textsc{Permutation Pattern Matching} and is known to be NP-hard \cite{BBL}, but of course can be solved in time $O(n^{k})$ by brute force. Albert et al.~\cite{Albert_algo} improved this to $O(n^{2/3k+1})$ time,  Ahal and Rabinovich \cite{Ahal} further improved it to $n^{0.47k+o(k)}$ time, and Berendsohn et al.~\cite{Berendsohn-ipec2019} gave an $n^{0.25k+o(k)}$ time algorithm. Guillemot and Marx~\cite{GM} showed that \textsc{Permutation Pattern Matching} can be solved in time $2^{O(k^2\log k)}\cdot n$, that is, it is fixed-parameter tractable (FPT) parameterized by the length of $\pi$.

Even though the problem is FPT, algorithms with running time $n^{ck}$ can be still interesting for two reasons. First, if $k$ is fairly large, say, $\Omega(\log n)$, then  $2^{O(k^2\log k)}\cdot n$ is actually worse than $n^{O(k)}$. Thus unless we have $2^{O(k)}\cdot n^{O(1)}$ FPT algorithms for the problem, we need different type of algorithms to understand the complexity of the problem in the regime where $k$ is large. Second, the $n^{ck}$ time algorithms \cite{Albert_algo, Ahal,Berendsohn-ipec2019} can be easily modified to count the total number of solutions, while the FPT algorithm of Guillemot and Marx~\cite{GM} returns only a single solution. This is not just a shortcoming of the presentation \cite{GM}: the FPT algorithm contains a step where a certain structure is discovered that guarantees that every permutation of length $k$ appears in $\sigma$. Then the algorithm stops and does not look for any further occurences of $\pi$. Furthermore, it is unlikely that the algorithm can be extended to a counting version: Berendsohn et al.~\cite{Berendsohn-ipec2019} proved that the counting problem is \#W[1]-hard.

The $n^{ck}$ algorithms for \textsc{Permutation Pattern Matching} \cite{Albert_algo, Ahal,Berendsohn-ipec2019} are implicitly or explicitly based on dynamic programming on a certain tree decomposition. Here we follow the presentation of Berendsohn et al.~\cite{Berendsohn-ipec2019}, where it is shown how high-level arguments and previous results on treewidth can be combined to obtain an $n^{k/3+o(k)}$ time in a very clean way (a further improvement, based on a technical idea of Cygan et al.~\cite{Cygan}, reduces the running time to $n^{0.25k+o(k)}$ \cite{Berendsohn-ipec2019}).

A permutation $\pi:[k]\to [k]$ can be seen as a $k$-element point set $S_\pi=\{(i,\pi(i)) \mid i\in [k]\}$ (see Figure~\ref{fig1}). With this interpretation, $\sigma$ contains $\pi$ if $S_\pi$ can be mapped to a subset of $S_\sigma$ in a way that the mapping preserves the relative ordering of any two points along both the horizontal axis and the vertical axis. For a point $p\in S_\pi$, we will denote by $p.x$ and $p.y$ the first and second coordinates of $p$, respectively. For each point $(x,y)\in S_\pi$, we define the four neighbors of $(x,y)$ as follows:
\begin{eqnarray*}
N^R((x,y)) & = & (x+1,~ \pi(x+1)), \\
N^L((x,y)) & = & (x-1,~ \pi(x-1)), \\
N^U((x,y)) & = & (\pi^{-1}(y+1),~ y+1), \\
N^D((x,y)) & = & (\pi^{-1}(y-1),~ y-1). 
\end{eqnarray*}

\begin{figure}
\begin{center}
\includegraphics[width=0.4\linewidth]{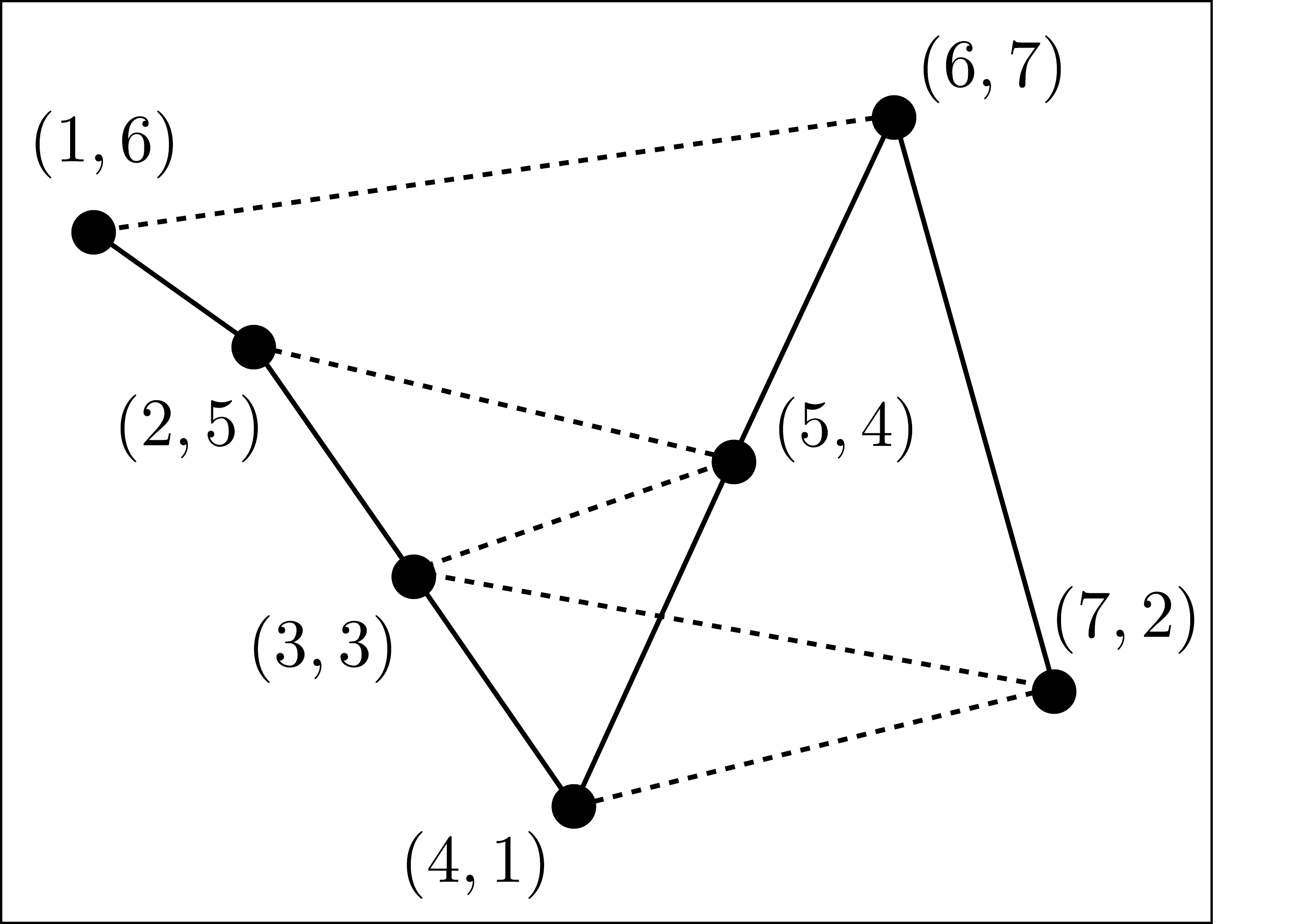}
 \caption{Permutation $\pi = (6,5,3,1,4,7,2)$ and its incidence graph $G_\pi$. Solid lines indicate neighbors by index (L-R), dashed lines indicate neighbors by value (U-D). Indices plotted on $x$-coordinate, values plotted on $y$-coordinate.\label{fig1}}
\end{center}
\end{figure}

The superscripts $R$, $L$, $U$, $D$ are meant to evoke the directions \emph{right}, \emph{left}, \emph{up}, \emph{down}, when plotting $S_\sigma$ in the plane. That is,  if we start sweeping the vertical line going through $(x,y)$ to the \underline{\em R}\hspace{1pt}ight, then $N^R((x,y))$ is the next point that we meet, and similarly with the other directions. Note that some neighbors of a point may coincide.

The {\em incidence graph} $G_\pi$  of $\pi$ is a graph on $S_\pi$ where each point is connected to its four neighbors (when defined). It is easy to see that $G_\pi$ is the union of two Hamiltonian paths on the same set $S_\pi$ of vertices, with one path going in the left-right direction in the plane, while the other path going in the top-bottom direction.

The key lemma that allows a clean abstraction of the problem is the following characterization of solutions.

\begin{lemma}\label{lem:perm}
  Let $\sigma:[n]\to [n]$ and $\pi:[k]\to [k]$ be two permutations. Then $\sigma$ contains $\pi$ if and only if there is a function $f:S_\pi\to S_\sigma$ such that for every $p\in S_\pi$
\begin{eqnarray}
f(N^L(p)).x & < ~~f(p).x  & < ~~f(N^R(p)).x, \mbox{~~and}\\
f(N^D(p)).y & < ~~f(p).y & < ~~f(N^U(p)).y, 
\end{eqnarray}
whenever the corresponding neighbor of $p$ is defined. 
\end{lemma}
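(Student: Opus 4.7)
The plan is to prove both directions by exploiting the fact that the $N^L,N^R$ relations and the $N^U,N^D$ relations give two Hamiltonian paths on $S_\pi$, as noted just before the lemma. For brevity let $p_i = (i,\pi(i))$ denote the points of $S_\pi$ in left-to-right order, and $q_m = (\pi^{-1}(m),m)$ the same points in bottom-to-top order; note $q_{\pi(i)}=p_i$.

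For the forward direction, I would assume $\sigma$ contains $\pi$ via some order-preserving injection $g\colon[k]\to[n]$ with $g(1)<\dots<g(k)$ and $\pi(i)<\pi(j)\iff \sigma(g(i))<\sigma(g(j))$. I would then define
\[
f(p_i) \;=\; (g(i),\ \sigma(g(i))) \;\in\; S_\sigma.
\]
The $x$-inequalities are immediate since $N^R(p_i)=p_{i+1}$ and $g$ is strictly increasing. For the $y$-inequalities, if $N^U(p_i)=p_j$ then $\pi(j)=\pi(i)+1>\pi(i)$, and the order-preserving property of $g$ yields $\sigma(g(j))>\sigma(g(i))$, so $f(N^U(p_i)).y > f(p_i).y$; the $N^D$ case is symmetric.

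For the backward direction, I would start from $f$ and chain the strict inequalities along the two Hamiltonian paths. The $N^L/N^R$ constraints give
\[
f(p_1).x < f(p_2).x < \dots < f(p_k).x,
\]
and the $N^D/N^U$ constraints give $f(q_1).y<f(q_2).y<\dots<f(q_k).y$. I would then set $g(i) := f(p_i).x$, which is strictly increasing as required. Because $f(p_i)\in S_\sigma$, we have $f(p_i).y = \sigma(g(i))$. Using $q_{\pi(i)}=p_i$, the second chain rewrites as $f(p_{\pi^{-1}(1)}).y<\dots<f(p_{\pi^{-1}(k)}).y$, which is exactly the statement that $\pi(i)<\pi(j)\iff f(p_i).y<f(p_j).y\iff \sigma(g(i))<\sigma(g(j))$. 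Hence $g$ witnesses that $\sigma$ contains $\pi$.

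The main conceptual (rather than computational) obstacle is recognizing that even though the stated conditions are purely \emph{local} (one inequality per edge of $G_\pi$), they are globally equivalent to an order-preserving embedding. This works precisely because the $L$-$R$ edges already form a total order on $S_\pi$ by $x$-coordinate, and the $U$-$D$ edges form a total order by $y$-coordinate; otherwise I would have to worry about non-consecutive pairs of points not being constrained. Once this point is isolated, both directions reduce to routine unwinding of definitions.
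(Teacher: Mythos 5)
Your proof is correct and is essentially the detailed version of the argument the paper only sketches: the paper points out that local inequalities propagate via chains of $U$-neighbors (resp.\ $R$-neighbors) to give all pairwise inequalities, and you make this precise by observing that the $L/R$ and $U/D$ relations each trace a Hamiltonian path inducing the total order by $x$- (resp.\ $y$-) coordinate, then chain along both paths. The explicit construction $f(p_i)=(g(i),\sigma(g(i)))$ in the forward direction and $g(i)=f(p_i).x$ in the backward direction are exactly the inverse bijections implicit in the paper's remark that such $f$ are in one-to-one correspondence with occurrences of $\pi$.
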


It is not very difficult to prove Lemma~\ref{lem:perm} using the definitions and
we can also see that the functions $f$ satisfying the requirements of Lemma~\ref{lem:perm} are in one to one correspondence with the occurrences of $\pi$ in $\sigma$.
The inequalities in the first line ensure that the mapping of points represent the left-to-right ordering, while the inequalities in the second line handle the top-to-bottom ordering.
The key observation is that even though we require these inequalities only between neighbors in $G_{\pi}$, it follows as consequence that every pairwise inequality in the definition of containment holds. For example, if $\pi(i)<\pi(j)$, then $(j,\pi(j))$ can be reached from $(i,\pi(i))$ by going through a sequence of U-neighbors, hence a sequence of inequalities ensure that the second coordinate of $f((i,\pi(i))$ is less than the second coordiante of $f((j,\pi(j)))$.

Readers familar with the notion of Constraint Satisfaction Problems (CSPs) may recognize that Lemma~\ref{lem:perm} cleanly transforms the problem into a binary constraint satisfaction problem.  A \emph{binary CSP} instance is a triplet $(V,D,C)$, where $V$ is a set of variables, $D$ is a set of admissible values (the \emph{domain}), and $C$ is a set of constraints $C = \{c_1, \dots, c_m\}$, where each constraint $c_i$ is of the form $((x,y),R)$, where $x,y \in V$, and $R \subseteq D^2$ is a binary relation.
A solution of the CSP instance is a function $f: V \rightarrow D$ (i.e., an assignment of admissible values to the variables), such that for each constraint $c_i = ((x_i,y_i),R_i)$, the pair of assigned values $(f(x_i),f(y_i))$ is contained in $R_i$.

The \emph{constraint graph} of the binary CSP instance (also known as \emph{primal graph} or \emph{Gaifman graph}) is a graph whose vertices are the variables $V$ and whose edges connect all pairs of variables that occur together in a constraint. Low treewidth of the constraint graph can be exploited for an efficient solution of the problem:

\begin{theorem}[\!\!\cite{treewidthCSP1, treewidthCSP2}] \label{thmtr1}
A binary CSP instance $(V,D,C)$ can be solved in time $O(|D|^{t+1})$ where $t$ is the {treewidth} of the constraint graph.
\end{theorem}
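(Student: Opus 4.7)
The plan is a standard bottom-up dynamic programming on a nice tree decomposition of the constraint graph. First I would take the given (or computed) tree decomposition of width $t$ and convert it into a nice rooted tree decomposition without increasing the width, so every bag $X_v$ has at most $t+1$ variables and every internal node is a leaf, introduce, forget, or join node. At each node $v$ I would maintain a table $T[v,\cdot]$ indexed by the at most $|D|^{t+1}$ assignments $\alpha\colon X_v\to D$, where $T[v,\alpha]=1$ iff $\alpha$ can be extended to the variables appearing in the subtree rooted at $v$ in a way that satisfies every constraint whose two variables both lie in that subtree.

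The transitions are routine. For a leaf, the empty assignment gets value $1$. For an introduce node $v$ with child $v'$ introducing variable $x$, set $T[v,\alpha]=T[v',\alpha|_{X_{v'}}]$ provided that every constraint $((x,y),R)\in C$ with $y\in X_{v'}$ is satisfied by the pair $(\alpha(x),\alpha(y))$; otherwise $0$. For a forget node removing $x$, take $T[v,\alpha]=\bigvee_{d\in D}T[v',\alpha\cup\{x\mapsto d\}]$. For a join node with children $v_1,v_2$ sharing bag $X_v$, set $T[v,\alpha]=T[v_1,\alpha]\wedge T[v_2,\alpha]$. The instance has a solution iff some entry at the root evaluates to $1$. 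Since each individual table has $|D|^{t+1}$ entries and each entry is computed in time polynomial in $|V|$, $|C|$, $|D|$, and a nice tree decomposition has $O(|V|)$ nodes, the total running time is $O(|D|^{t+1})$ up to factors polynomial in the input size, which matches the claim.

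The only nontrivial point — and thus the main obstacle for correctness — is to argue that every constraint is actually enforced somewhere in the DP. This relies on the classical property of tree decompositions that for every edge of the constraint graph (and hence for every constraint $((x,y),R)\in C$) there exists at least one bag containing both endpoints. In the rooted nice decomposition, this means there is a well-defined topmost node where the second of $x,y$ is introduced while the first is already present in the bag; the introduce-node rule above checks exactly those constraints at that moment, so each constraint is checked precisely once. A straightforward induction on the subtree rooted at $v$ then shows that $T[v,\alpha]$ equals the intended semantic value, and in particular that a true entry at the root corresponds to a genuine solution of the CSP.
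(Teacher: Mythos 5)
The paper does not prove this theorem itself—it only cites it as a known result—so there is no proof in the paper to compare against. Your dynamic-programming argument on a nice tree decomposition (tables of partial assignments on bags, with introduce/forget/join transitions and the observation that every constraint edge is contained in some bag) is the standard and correct proof of this classical fact; the one small inaccuracy, that each constraint is "checked precisely once," is not quite literally true since a variable may be re-introduced along different branches, but redundant checks do not affect correctness or the stated running time.
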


To view the \textsc{Permutation Pattern Matching} problem as a binary CSP instance, let $V=S_\pi$ be the set of variables and let $D=S_\sigma$ be the domain. Then we want to find a function $f$ that satisfies the inequalities in Lemma~\ref{lem:perm}. Each inequality is a binary constraint between $p$ and $N^\alpha(p)$ for some $\alpha\in \{L,R,D,U\}$, restricting the possible combination of values that $f(p)$ and $f(N^\alpha(p))$ can take. Thus we end up with a CSP instance on $k$ variables, domain size $n$, and whose constraint graph is exactly $G_\pi$.

In order to invoke Theorem~\ref{thmtr1} on this instance, we need to bound the treewidth of $G_\pi$. Recall that $G_\pi$ has $k$ vertices and maximum degree $4$. By splitting each degree-4 vertex into two degree-3 vertices connected by an edge, we can create a graph $G'_\pi$ that has at most $2k$ vertices, maximum degree 3, and $G_\pi$ is a minor of $G'_\pi$. Then Theorem~\ref{th:cubicpathwidth} shows that $G'_\pi$ has treewidth $2k/6+o(k)=k/3+o(k)$ and $G_\pi$ being a minor of $G'_\pi$ shows that the same bound holds for $G_\pi$ as well. Therefore, we can conclude that Theorem~\ref{thmtr1} solves the instance in time $n^{k/3+o(k)}$. It is not difficult to modify the algorithm to count the number of solutions. Therefore, the combination of an easy observation (Lemma~\ref{lem:perm}), a combinatorial treewidth bound (Theorem~\ref{th:cubicpathwidth}), and a known general algorithm (Theorem~\ref{thmtr1}) solves the problem in a very clean way.

In Lemma~\ref{lem:perm}, the functions $f$ satisfying the requirements are in one to one correspondence with the occurences of $\pi$ in $\sigma$ and Theorem~\ref{thmtr1} can be extended to a counting version. Theorem~\ref{th:cubicpathwidth} is purely combinatorial, thus it is of course irrelevant if we are using it for the decision or the counting problem. Thus the same algorithmic idea goes through.
\begin{theorem}[Berendsohn et al.~\cite{Berendsohn-ipec2019}]
Given a length-$k$ permutation $\pi$ and length-$n$ permutation $\sigma$, the number of occurrences of $\pi$ in $\sigma$ can be counted in time $n^{k/3+o(k)}$.
  \end{theorem}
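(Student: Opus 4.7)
The plan is to combine the three ingredients already made available: the CSP reformulation from Lemma~\ref{lem:perm}, the pathwidth bound for cubic graphs from Theorem~\ref{th:cubicpathwidth}, and the treewidth-based CSP algorithm from Theorem~\ref{thmtr1}, turning each into its counting analogue. First I would encode the search for an occurrence of $\pi$ in $\sigma$ as a binary CSP with variable set $V = S_\pi$, domain $D = S_\sigma$, and one binary constraint per pair $(p, N^\alpha(p))$ with $\alpha \in \{L,R,U,D\}$ encoding the inequalities of Lemma~\ref{lem:perm}. Since the lemma already asserts a one-to-one correspondence between functions $f$ satisfying these inequalities and occurrences of $\pi$ in $\sigma$, counting occurrences reduces exactly to counting satisfying assignments of this CSP instance; note that the constraint graph is precisely $G_\pi$, with $k$ vertices and maximum degree at most $4$.

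Next I would bound the treewidth of $G_\pi$ by reducing to the cubic case. I would construct $G'_\pi$ by splitting each degree-$4$ vertex of $G_\pi$ into two degree-$3$ vertices joined by a fresh edge. The resulting graph has at most $2k$ vertices, maximum degree $3$, and $G_\pi$ is a minor of $G'_\pi$ (contract the splitting edges to recover the original). Theorem~\ref{th:cubicpathwidth} then gives, for any $\epsilon>0$ and all sufficiently large $k$, the bound $\mathrm{pw}(G'_\pi) \le (1/6+\epsilon)\cdot 2k$, so $\mathrm{pw}(G'_\pi) = k/3 + o(k)$. Since treewidth does not increase under taking minors, $\mathrm{tw}(G_\pi) \le k/3 + o(k)$ as well, and one can extract an accompanying tree decomposition of this width along the standard construction used to prove Theorem~\ref{th:cubicpathwidth}.

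Finally, I would feed this tree decomposition into a counting version of Theorem~\ref{thmtr1}: the textbook dynamic program on a nice tree decomposition stores, at each bag $B$, a table indexed by the $|D|^{|B|}$ partial assignments of $B$ and records, for each one, the number of ways it extends to a satisfying assignment on the variables introduced below. Introduce, forget, and join nodes are handled by standard sum-product updates that respect multiplicity, so the overall running time remains $|D|^{t+1}\cdot \mathrm{poly}(k,n) = n^{k/3+o(k)}$ and the root of the decomposition yields the total number of satisfying assignments, which by Lemma~\ref{lem:perm} equals the number of occurrences of $\pi$ in $\sigma$.

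The step I expect to demand the most care is the bookkeeping in the counting DP, specifically verifying that coinciding neighbors in the definition of $N^L,N^R,N^U,N^D$ do not cause the bijection of Lemma~\ref{lem:perm} to double-count, and that the join nodes of the tree decomposition multiply counts correctly rather than summing them. Once these points are checked, the combinatorial bound on $\mathrm{tw}(G_\pi)$ and the generic CSP algorithm plug in immediately and deliver the claimed $n^{k/3+o(k)}$ running time.
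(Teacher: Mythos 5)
Your proposal is correct and matches the paper's argument step for step: recast the problem as a binary CSP via Lemma~\ref{lem:perm}, bound $\mathrm{tw}(G_\pi)$ by $k/3+o(k)$ by splitting degree-$4$ vertices to invoke Theorem~\ref{th:cubicpathwidth} and then taking a minor, and run the counting version of Theorem~\ref{thmtr1}. The only difference is that you spell out the counting dynamic program in more detail, whereas the paper simply remarks that Theorem~\ref{thmtr1} extends to counting; the worry you flag about coinciding neighbors is harmless, since they only create redundant constraints and do not affect the solution count.
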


\section{Counting subgraphs}
It is a well-known phenomenon in theoretical computer science that in many cases finding a solution is easier than counting the number of all solutions. For example, it can be checked in polynomial time if a bipartite graph contains a perfect matching, but the seminal result of Valiant shows that counting the number of perfect matchings is \#P-hard and hence unlikely to be polynomial-time solvable \cite{DBLP:journals/tcs/Valiant79}. By now, many other examples of hard counting problems are known.

Flum and Grohe \cite{DBLP:journals/siamcomp/FlumG04} started the investigation of the complexity of counting in the setting of parameterized complexity. They introduced the notion of \#W[1]-hardness to give evidence that certain parameterized counting problems are unlikely to be FPT. As a highly nontrivial example, they considered the \textsc{$k$-Path} problem: the decision version is known to be FPT by various techniques \cite{DBLP:journals/jacm/AlonYZ95,DBLP:journals/jacm/FominLPS16}, but they showed that the counting version of the problem is \#W[1]-hard. In the same paper, they asked as an open question whether the counting version of the polynomial-time solvable \textsc{$k$-Matching} problem is FPT. This question was resolved in the negative by the \#W[1]-hardness proof of Curticapean~\cite{DBLP:conf/icalp/Curticapean13}, which used heavy algebraic machinery, and by the later simpler proof given by Curticapean and Marx \cite{DBLP:conf/focs/CurticapeanM14}. More recently, Dell et al. \cite{DBLP:conf/stoc/CurticapeanDM17} described and exploited a connection beween subgraph counting and homomorphism counting problems. This connection can be useful in two different ways: it gives new subgraph-counting algorithms by reducing it to homomorphism-counting problems, and gives hardness results for subgraph counting (including new and clean \#W[1]-hardness proofs of \textsc{$k$-Matching} and \textsc{$k$-Path}) based on our understanding of the complexity of counting homomorphisms. Below we give an example of the algorithmic use of this connection.

Given the \#W[1]-hardness of \textsc{$k$-Path}, we cannot hope for an FPT algorithm solving the problem. But it is still an interesting question whether we can improve on the trivial $n^{k+O(1)}$ time brute force algorithm.  The ``meet in the middle'' approach can be used to improve this to $n^{k/2+O(1)}$ time \cite{koutis2009limits,bjorklund2009counting}, which was further improved by Björklund et al.~\cite{fasterthanmeetinthemiddle} to $n^{0.455k+O(1)}$. Here we describe an algorithm with running time $k^{O(k)}\cdot n^{0.174k+o(k)}$, which has a much smaller exponent for a fixed $k$ and at the same time conceptually much simpler.

Let us first review some basic background on homomorphisms. A {\em homomorphism} from graph $H$ to graph $G$ is a mapping $f:V(H)\to V(G)$ such that for every edge $uv\in E(H)$, we have
$f(u)f(v)\in E(G)$. We will denote by $\#\Hom HG$ the number of homomorphisms from $H$ to $G$. Given a tree decomposition of $H$, standard dynamic programming techniques can be used to compute the number of homomorphisms from $H$ to a given graph $G$.

\begin{theorem}[Díaz et al.~\cite{DBLP:journals/tcs/DiazST02}]\label{th:homcount}
Given graphs $H$ and $G$,  $\#\Hom HG$ can be computed in time $(|V(H)|+|V(G)|)^{w+O(1)}$, where $w$ is the treewidth of $H$.
\end{theorem}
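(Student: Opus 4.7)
The plan is to do dynamic programming along a nice tree decomposition of $H$. First I would obtain a nice tree decomposition of $H$ of width $O(w)$ (via Bodlaender's algorithm or any constant-factor approximation) and refine it into the standard ``introduce-edge'' variant, in which every edge of $H$ has a unique introduce-edge node whose bag contains both endpoints; this keeps the number of nodes $O(|V(H)|+|E(H)|)$ and the width unchanged.

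Next, for each node $t$ with bag $X_t$ and each function $\phi : X_t \to V(G)$, I would maintain a table entry $D_t[\phi]$ counting the number of mappings $h : V_t \to V(G)$ that extend $\phi$ and satisfy $h(u)h(v) \in E(G)$ for every edge $uv$ of $H$ already introduced in the subtree rooted at $t$; here $V_t$ is the union of bags in that subtree. Each table has at most $|V(G)|^{w+1}$ entries. The transitions are the textbook ones: a leaf initializes $D_t[\emptyset]=1$; an introduce-vertex node extends $\phi$ by letting the new vertex take any value in $V(G)$; an introduce-edge node for $uv$ multiplies the child's entry by $\mathbf{1}[\phi(u)\phi(v)\in E(G)]$; a forget-vertex node for $v$ sums the child's entries over the $|V(G)|$ possible values of $v$; and a join node multiplies the corresponding entries of the two children, which share the same bag. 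The final answer is $D_r[\emptyset]$ at the root, whose bag is empty. Each transition runs in $|V(G)|^{w+O(1)}$ time, and since there are $O(|V(H)|+|E(H)|)$ nodes, the overall running time is $(|V(H)|+|V(G)|)^{w+O(1)}$ as required.

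The main delicate point I expect is correctness at the join node. The tree-decomposition property gives $V_{t_1} \cap V_{t_2} = X_t$, so once $\phi$ is fixed on $X_t$ the extensions into the two subtrees are independent and the product of the two table entries counts the combined extensions without overlap. The introduce-edge refinement is what makes this clean: it guarantees that each edge constraint is imposed at exactly one node of the decomposition, so no edge is double-counted at a join and none is missed. Apart from this edge-accounting, the algorithm is essentially the counting analogue of the CSP dynamic program underlying Theorem~\ref{thmtr1}, specialized to the CSP in which the variables are $V(H)$, the domain is $V(G)$, and every constraint is the edge relation of $G$; verifying that the standard DP extends from feasibility to counting (thanks to the disjointness at joins) is then routine.
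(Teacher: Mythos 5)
Your dynamic program on a nice tree decomposition with introduce-edge nodes is exactly the standard argument (the one from D\'iaz et al., which the paper cites rather than reproves), and the delicate parts you flag---the join node and the edge accounting---are handled correctly. There is, however, one genuine gap in the setup.

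You propose to obtain the tree decomposition via a constant-factor approximation, of width $O(w)$. That would give tables of size $|V(G)|^{O(w)}$ and overall running time $(|V(H)|+|V(G)|)^{O(w)}$, which is strictly weaker than the claimed $(|V(H)|+|V(G)|)^{w+O(1)}$: a multiplicative constant in the exponent cannot be absorbed into the additive $O(1)$. (Your own table-size estimate $|V(G)|^{w+1}$ is also inconsistent with a width-$O(w)$ decomposition.) To get the stated bound you need a decomposition of width $w$ itself, and you need to find it within the time budget. This can be done, for example, with the Arnborg--Corneil--Proskurowski algorithm, which computes an optimal tree decomposition in time $|V(H)|^{w+O(1)}$; this is precisely why the theorem's running time has $|V(H)|$ (and not just $|V(G)|$) raised to the $w$-th power, and why the paper remarks right after the statement that no decomposition needs to be supplied. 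Replace the approximation step with that exact computation and the rest of your proof goes through.
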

Note that the algorithm of Theorem~\ref{th:homcount} does not need a decomposition of $H$, as it can be found in time $|V(H)|^{c+O(1)}$.

  A homomorphism $f:V(H)\to V(G)$ is {\em injective} if $f(u)\neq f(v)$ for any two distinct $u,v\in V(H)$; let $\#\Emb HG$ denote the number of such homomorphisms.
Let us denote by $\#\Sub HG$ the number of subgraphs of $G$ that are isomorphic to $H$.  It is well known and easy to see that $\#\Emb HG=\# \Sub HG \cdot \# \Aut H$, where $\# \Aut H = \#\Emb HH$ is the number of automorphisms of the graph $H$. Therefore, for a fixed $H$, computing $\# \Sub HG$ is essentially equivalent to computing $\# \Emb HG$, the number of injective homomorphisms. In order to explain the connection between counting homomorphisms and subgraphs, it will be more convenient to work with $\# \Emb HG$ than with $\# \Sub HG$, as the former is already defined in terms of homomorphisms.

Of course, not every homomorpism from $H$ to $G$ is injective, the images of some vertices may coincide.
For example, if $H$ is the 4-cycle on vertices ${1,2,3,4}$, then a homomorphism from $H$ to a loopless graph $G$ either (1) is injective, (2) identifies $1$ with $3$, (3) identifies $2$ with $4$, (4) identifies $1$ with $3$, and $2$ with $4$. In case (1), the image of $H$ is a 4-cycle; in cases (2) and (3), the image of $H$ is the path $P_3$ on three vertices; and in case (4), the image of $H$ is the path $P_2$ on two vertices. This shows that the following formula holds for the number of homomorphisms:
\[
\#\Hom{C_4}{G}=\#\Emb{C_4}{G}+2\cdot\#\Emb{P_3}{G}+\#\Emb{{P_2}}{G}.
\]

More generally, we can classify the homomorphisms according to which sets of vertices they identify.
To each homomorphism $h:V(G)\to V(H)$, we 
can associate a partition $\rho_h$ of $V(H)$ with the meaning that, for every $u,v\in V(H)$, we have $h(u)=h(v)$ if and only $u$ and $v$ are in the same block of $\rho$.  For a partition $\rho$ of $V(H)$, let $\contract H\rho$ be the {\em quotient graph} obtained by consolidating each block of $\rho$ into a single vertex. The key observation is that the homomorphisms from $H$ to $G$ having type $\rho$ are in one-to-one correspondence with the injective homomorphisms from $\contract H\rho$ to $G$. Therefore, we can express the number of homomorphisms from $H$ to $G$ as
\begin{equation}
  \#\Hom{H}{G}=\sum_{\rho} \#\Emb{\contract H\rho}{G},\label{eq:hompartemb}
\end{equation}
  where the sum ranges over every partition $\rho$ of $V(H)$.

  Why is this useful for us? Observe that $H=\contract H\rho$ holds only for the partition $\rho_0$ where every block has size exactly one and $\contract H\rho$ has strictly fewer vertices for every other $\rho$.  Therefore, Eq.~\eqref{eq:hompartemb} can be written as
  \[
   \#\Hom{H}{G}=\#\Emb{H}{G}+\sum_{\rho\neq \rhofine} \#\Emb{\contract H\rho}{G},
 \]
 and hence
 \begin{equation}
   \#\Emb{H}{G}=\#\Hom{H}{G}-\sum_{\rho\neq \rhofine} \#\Emb{\contract H\rho}{G}. \label{eq:embparthom1}
 \end{equation}
 That is, Eq.~\eqref{eq:embparthom1} reduces the problem of computing $\#\Emb{H}{G}$ to the problem of computing $\#\Hom{H}{G}$ and to computing some number of $\#\Emb{\contract H\rho}{G}$ values, where $\contract H\rho$ has strictly fewer vertices than $|V(H)|$. Therefore, we can repeat the same argument and recursively replace each term  $\#\Emb{\contract H\rho}{G}$ with a $\HomNum$ term and some number of $\EmbNum$ terms. As the replacement strictly decreases the number of vertices in the $\EmbNum$ terms, eventually all these terms disappear, and we can express $\#\Emb{H}{G}$ as the linear combination of $\#\Hom{H'}{G}$ values for various graphs $H'$. This means that we can reduce the problem of computing $\# \Emb{H}{G}$ to computing certain homomorphism values.

 Which graphs $H'$ can appear in the $\#\Hom{H'}{G}$ terms when we express $\#\Emb{H}{G}$ this way? It is easy to see that the quotient graph of a quotient of $H$ is also a quotient graph of $H$. This means that every graph $H'$ appearing in this linear combination is a quotient graph of $H$.
Thus we can express $\#\Emb HG$ as
 \begin{align}
\label{eq: emb2hom-intro0}
  \#\Emb HG
  &=
  \sum_{\rho}
    \beta_{\rho,H} \cdot  \#\Hom {\contract H\rho}{G}
  \,,
 \end{align}
 where $\beta_{\rho,H}$ is a constant depending only on $\rho$ and $H$. The argument described above gives an algorithm for writing $\#\Emb HG$ in this form and for computing the constants $\beta_{\rho,H}$ (and the work of Lovász et al.~\cite{lovasz1967operations,borgs2006counting} gives more explicit formulas for these constants). Given this expression, we can reduce the problem of computing $\#\Emb HG$ to computing the values $\#\Hom {\contract H\rho}{G}$. If $H$ has $k$ vertices, then the sum ranges over $k^{O(k)}$ different partitions $\rho$. Therefore, if every $\contract H\rho$ has treewidth bounded by $c$, then invoking Theorem~\ref{th:homcount} for the computation of each $\#\Hom {\contract H\rho}{G}$ results in an algorithm with running time $k^{O(k)}\cdot n^{c+O(1)}$ for the computation of $\#\Emb HG$ (and hence of $\#\Sub HG$).

 These considerations show that bounding the running time of our algorithm essentially boils down to a bound on the maximum treewidth of $\contract H\rho$.
 The treewidth of $\contract H\rho$ can be much larger than the treewidth of $H$. For example, it is not difficult to see that if $H$ is a matching with $k$ independent edges, then we can obtain any connected graph with $k$ edges as $\contract H\rho$ for an appropriate partition $\rho$. However, this operation cannot increase the number of edges: if $H$ has $k$ edges, then $\contract H\rho$ has at most $k$ edges. We can use the following bound on the treewidth of graphs with at most $k$ edges:
 \begin{theorem}[\hspace{0.01em}\cite{Scott2007260,DBLP:journals/algorithmica/FominGSS09}]\label{th:kedgetw}
Every graph with at most $k$ edges has treewidth $0.174k+o(k)$.
\end{theorem}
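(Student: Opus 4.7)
My plan is to reduce to the subcubic case, where Theorem~\ref{th:cubicpathwidth} is directly applicable. First I would normalize $G$ using two treewidth-preserving reductions: (a) deleting vertices of degree $0$ or $1$, and (b) suppressing vertices of degree $2$ by contracting one of the two incident edges. Contraction never increases treewidth, and since the inverse operation is edge subdivision, which does not change treewidth for graphs of treewidth at least $2$, we obtain $\text{tw}(G^*) = \text{tw}(G)$ for the reduced graph $G^*$ (small cases with $\text{tw}(G) \le 1$ are trivially within the claimed bound). Both operations weakly decrease the edge count, so $|E(G^*)| \le k$, and exhaustion gives $\delta(G^*) \ge 3$. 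Hence $|V(G^*)| \le 2|E(G^*)|/3 \le 2k/3$.

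If $\Delta(G^*) \le 3$, then $G^*$ is $3$-regular and Theorem~\ref{th:cubicpathwidth} at once yields $\text{tw}(G) = \text{tw}(G^*) \le |V(G^*)|/6 + o(|V(G^*)|) \le k/9 + o(k)$, which is already stronger than the claimed bound. The substantive case is when $G^*$ contains vertices of degree $\ge 4$; let $H$ denote the set of those vertices. From $4|H| \le \sum_{v\in H}\deg(v) \le 2k$ we get $|H| \le k/2$. Removing $H$ leaves a subcubic graph, and combining the standard inequality $\text{tw}(G^*) \le |H| + \text{tw}(G^* - H)$ with Theorem~\ref{th:cubicpathwidth} applied to $G^* - H$ produces a bound of the form $|H| + (|V(G^*)| - |H|)/6 + o(k)$.

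A direct substitution with $|H| \le k/2$ and $|V(G^*)| \le 2k/3$ is far too wasteful. The refinement in \cite{Scott2007260,DBLP:journals/algorithmica/FominGSS09} is to parametrize the threshold defining ``high-degree'' (considering cutoffs $d = 4, 5, 6, \ldots$, each with its own bound $|V_{\ge d}| \le 2k/d$) and to exploit the global identity $\sum_v \deg(v) = 2k$, which couples the size of the removed high-degree set to the number of edges it absorbs and hence to the size of the subcubic residue. Solving the resulting small optimization (or iterating the split to balance edge-loss against treewidth-gain layer by layer) yields the constant $1/5.769 \approx 0.174$. The main obstacle is precisely this numerical balancing; all surrounding ingredients (the reduction to min-degree $3$, the deletion bound on treewidth, and the invocation of Theorem~\ref{th:cubicpathwidth}) are routine.
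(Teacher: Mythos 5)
The paper itself does not prove this statement; it is quoted from~\cite{Scott2007260,DBLP:journals/algorithmica/FominGSS09} as a black box, so there is no internal proof to compare against. I can only assess your sketch on its own terms. Your preprocessing (delete degree-$0$ and degree-$1$ vertices, suppress degree-$2$ vertices; both preserve treewidth up to trivial cases and weakly decrease the edge count, so after exhaustion $\delta(G^*)\geq 3$) and the resulting subcubic case (where $|V(G^*)|\leq 2k/3$ and Theorem~\ref{th:cubicpathwidth} give $k/9+o(k)$) are correct and are genuine ingredients of the argument in the cited works.

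The gap is exactly where you flag it, but it is a real gap, not merely a numerical chore, and the first route you propose actually fails. The single-cutoff scheme (remove $H=V_{\geq d}$, bound $|H|\leq 2k/d$, apply the subcubic bound to $G^*-H$) cannot produce $0.174k$ for any choice of $d$: for $d=4$ you pay $|H|$ in full, and on a $4$-regular instance this alone is already $k/2$; for $d\geq 5$ the residue $G^*-H$ has maximum degree up to $d-1$, so Theorem~\ref{th:cubicpathwidth} does not even apply, and you have no a priori bound to feed it. What the cited papers actually prove is a \emph{degree-weighted} pathwidth bound of the form $\text{pw}(G)\leq \sum_{d\geq 3} c_d\, n_d + o(n)$, where $n_d$ is the number of vertices of degree $d$. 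The coefficients are fixed by an induction on $|V(G)|$ in which one deletes a \emph{single} vertex at a time, pays $+1$ in pathwidth, and credits the resulting degree drop of each neighbor; the Monien--Preis / Fomin--H\o ie bound anchors the base case $c_3=1/6$, and a short recursion pins down $c_4, c_5,\dots$ Converting this to a bound in the number of edges is then a one-line linear program: maximize $\sum_d c_d n_d$ subject to $\sum_d d\, n_d = 2k$, whose optimum sits at $5$-regular graphs and evaluates to $\tfrac{13}{75}k\approx 0.1733k$. Your parenthetical alternative (``iterating the split to balance edge-loss against treewidth-gain layer by layer'') is in the right spirit of this measure-and-conquer recursion, but as written the proposal defers precisely this step, so the constant $0.174$ is asserted rather than derived, and the path you actually sketch in detail (the threshold cut) would not reach it.
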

This immediately gives an upper bound on the running time needed if $H$ has at most $k$ edges.
\begin{theorem}[Dell et al.~\cite{DBLP:conf/stoc/CurticapeanDM17}]
If $H$ has at most $k$ edges, then $\#\Emb HG$ and $\#\Sub HG$ can be computed in time $k^{O(k)}\cdot n^{0.174k+o(k)}$.
\end{theorem}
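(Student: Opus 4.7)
The approach is to use the recursive identity~\eqref{eq:embparthom1} exactly as sketched preceding the theorem to express $\#\Emb HG$ in the form~\eqref{eq: emb2hom-intro0} as a linear combination of $\#\Hom{\contract H\rho}{G}$ values over partitions $\rho$ of $V(H)$, then to bound the treewidth of every quotient graph $\contract H\rho$ and invoke Theorem~\ref{th:homcount} term-by-term. The subgraph count $\#\Sub HG$ is recovered at the end by dividing by $\#\Aut H=\#\Emb HH$, so it suffices to analyse $\#\Emb HG$.

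First I would dispose of isolated vertices of $H$: if $H$ has $i$ isolated vertices, each injective homomorphism of $H$ corresponds to an injective homomorphism of the non-isolated part together with an ordered choice of $i$ distinct images among the remaining vertices of $G$, so $\#\Emb HG$ differs from its value on the non-isolated part by a falling-factorial multiplier computable in polynomial time. Hence we may assume $|V(H)|\le 2|E(H)|\le 2k$, and then the number of partitions $\rho$ of $V(H)$ is at most $(2k)^{2k}=k^{O(k)}$. Unfolding~\eqref{eq:embparthom1} recursively produces coefficients $\beta_{\rho,H}$ that depend only on $H$; since none of this bookkeeping touches $G$, it is absorbed into the $k^{O(k)}$ factor. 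Next I would observe that passing to a quotient cannot increase the edge count: each identification of two vertices only merges parallel edges and possibly creates self-loops, and loops in the source can be dropped without changing the number of homomorphisms into the simple graph $G$. Hence every $\contract H\rho$ has at most $k$ edges, and Theorem~\ref{th:kedgetw} bounds its treewidth by $0.174k+o(k)$. Theorem~\ref{th:homcount} then computes each $\#\Hom{\contract H\rho}{G}$ in time $n^{0.174k+o(k)}$, and summing over the $k^{O(k)}$ partitions yields the claimed $k^{O(k)}\cdot n^{0.174k+o(k)}$ total running time.

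The main obstacle I expect is the justification of the M\"obius-style expansion~\eqref{eq: emb2hom-intro0}. One needs to verify that the recursive substitution of~\eqref{eq:embparthom1} terminates and yields coefficients $\beta_{\rho,H}$ depending only on $\rho$ and $H$, not on $G$. Termination is immediate because every replacement strictly decreases $|V(\contract H\rho)|$ on each $\EmbNum$ term appearing on the right-hand side, and independence from $G$ follows by induction on $|V(H)|$ with the single-vertex case as the base. With this in hand, the three black-box ingredients (the expansion, Theorem~\ref{th:kedgetw}, and Theorem~\ref{th:homcount}) compose cleanly into the stated bound, and the $\#\Aut H$ division produces $\#\Sub HG$ within the same time budget.
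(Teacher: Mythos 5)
Your proposal matches the paper's argument essentially step for step: unfold \eqref{eq:embparthom1} into the linear combination \eqref{eq: emb2hom-intro0}, note that quotients cannot gain edges, bound the treewidth of each $\contract H\rho$ via Theorem~\ref{th:kedgetw}, evaluate each $\#\Hom{\contract H\rho}{G}$ via Theorem~\ref{th:homcount}, sum over the $k^{O(k)}$ partitions, and divide by $\#\Aut H$ for $\#\Sub HG$. Your preprocessing of isolated vertices is a sensible addition the paper leaves implicit (the paper bounds the number of partitions by ``$k^{O(k)}$ if $H$ has $k$ vertices,'' but the theorem's hypothesis is $k$ edges, so one does need $|V(H)|=O(k)$ after pruning isolated vertices). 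One small slip: you assert that loops created by a quotient ``can be dropped without changing the number of homomorphisms into the simple graph $G$.'' That is not so --- deleting a loop from $H'$ can increase $\#\Hom{H'}{G}$ from $0$ to something positive. The correct observation is that any $\contract H\rho$ with a loop satisfies $\#\Hom{\contract H\rho}{G}=0$ when $G$ is simple (equivalently, partitions merging adjacent vertices never arise in \eqref{eq:hompartemb} when $G$ is loopless), so those terms simply vanish. This does not affect your treewidth or edge-count bounds, and the rest of the argument goes through unchanged.
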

In particular, we obtain algorithms with running time $k^{O(k)}\cdot n^{0.174k+o(k)}$ if $H$ is a path with $k$ edges (the \textsc{$k$-Path} problem) or a matching with $k$ edges (the \textsc{$k$-Matching} problem). We want to emphasize that for a fixed $H$, the algorithm is very simple: it consists of invoking Theorem~\ref{th:homcount} for various graphs $H'=\contract H\rho$ and then taking a linear combination of these values. All the real work is done by the computation of the fixed constants $\beta_{\rho,H}$ and by the algorithm of Theorem~\ref{th:homcount} exploiting low treewidth and tree decompositions.

\bibliographystyle{splncs04}
\bibliography{marx}

\end{document}